
\documentclass{article}
\usepackage{etex}
\usepackage{amsmath}
\usepackage{tikz}
\usepackage{tkz-graph}
\usepackage{bussproofs}
\usepackage{graphicx}


\usetikzlibrary{arrows,automata}
\usetikzlibrary{arrows,shapes,positioning}
\usetikzlibrary{tikzmark,decorations.pathreplacing,calc}
\newtheorem{theorem}{Theorem}

\newtheorem{axiom}[theorem]{Axiom}

\newtheorem{claim}[theorem]{Claim}

\newtheorem{conjecture}[theorem]{Conjecture}
\newtheorem{corollary}[theorem]{Corollary}

\newtheorem{definition}[theorem]{Definition}
\newtheorem{example}[theorem]{Example}

\newtheorem{exercise}[theorem]{Exercise}
\newtheorem{lemma}[theorem]{Lemma}

\newtheorem{proposition}[theorem]{Proposition}
\newtheorem{remark}[theorem]{Remark}

\newtheorem{partial solution}[theorem]{Partial Solution}

\newenvironment{proof}[1][Proof]{\textbf{#1.} }{\ \rule{0.5em}{0.5em}}
\newcommand{\imply}{\rightarrow}
\typeout{TCILATEX Macros for Scientific Word 3.0 <19 May 1997>.}
\typeout{NOTICE:  This macro file is NOT proprietary and may be 
freely copied and distributed.}
\makeatletter
%
\newcount\@hour\newcount\@minute\chardef\@x10\chardef\@xv60
\def\tcitime{
\def\@time{%
  \@minute\time\@hour\@minute\divide\@hour\@xv
  \ifnum\@hour<\@x 0\fi\the\@hour:%
  \multiply\@hour\@xv\advance\@minute-\@hour
  \ifnum\@minute<\@x 0\fi\the\@minute
  }}%

\@ifundefined{hyperref}{}{}

\@ifundefined{qExtProgCall}{\def\qExtProgCall#1#2#3#4#5#6{\relax}}{}
%
%
%
%
\def\QCTOpt[#1]#2{%
  \def\QCTOptB{#1}
  \def\QCTOptA{#2}
}
\def\QCTNOpt#1{%
  \def\QCTOptA{#1}
  \let\QCTOptB\empty
}
\def\Qct{%
  \@ifnextchar[{%
    \QCTOpt}{\QCTNOpt}
}
\def\QCBOpt[#1]#2{%
  \def\QCBOptB{#1}
  \def\QCBOptA{#2}
}
\def\QCBNOpt#1{%
  \def\QCBOptA{#1}
  \let\QCBOptB\empty
}
\def\Qcb{%
  \@ifnextchar[{%
    \QCBOpt}{\QCBNOpt}
}
\def\PrepCapArgs{%
  \ifx\QCBOptA\empty
    \ifx\QCTOptA\empty
      {}%
    \else
      \ifx\QCTOptB\empty
        {\QCTOptA}%
      \else
        [\QCTOptB]{\QCTOptA}%
      \fi
    \fi
  \else
    \ifx\QCBOptA\empty
      {}%
    \else
      \ifx\QCBOptB\empty
        {\QCBOptA}%
      \else
        [\QCBOptB]{\QCBOptA}%
      \fi
    \fi
  \fi
}
\newcount\GRAPHICSTYPE
\GRAPHICSTYPE=\z@
\def\GRAPHICSPS#1{%
 \ifcase\GRAPHICSTYPE
   \special{ps: #1}%
 \or
   \special{language "PS", include "#1"}%
 \fi
}%
%
%
%
\def\graffile#1#2#3#4{%
    \bgroup
    \leavevmode
    \@ifundefined{bbl@deactivate}{\def~{\string~}}{\activesoff}
    \raise -#4 \BOXTHEFRAME{%
        \hbox to #2{\raise #3\hbox to #2{\null #1\hfil}}}%
    \egroup
}%
%
\def\draftbox#1#2#3#4{%
 \leavevmode\raise -#4 \hbox{%
  \frame{\rlap{\protect\tiny #1}\hbox to #2%
   {\vrule height#3 width\z@ depth\z@\hfil}%
  }%
 }%
}%
\newcount\draft
\draft=\z@

\newif\ifwasdraft
\wasdraftfalse

\def\GRAPHIC#1#2#3#4#5{%
 \ifnum\draft=\@ne\draftbox{#2}{#3}{#4}{#5}%
  \else\graffile{#1}{#3}{#4}{#5}%
  \fi
 }%
\def\addtoLaTeXparams#1{%
    \edef\LaTeXparams{\LaTeXparams #1}}%
%

\newif\ifBoxFrame \BoxFramefalse
\newif\ifOverFrame \OverFramefalse
\newif\ifUnderFrame \UnderFramefalse

\def\BOXTHEFRAME#1{%
   \hbox{%
      \ifBoxFrame
         \frame{#1}%
      \else
         {#1}%
      \fi
   }%
}

\def\doFRAMEparams#1{\BoxFramefalse\OverFramefalse\UnderFramefalse\readFRAMEparams#1\end}%
\def\readFRAMEparams#1{%
 \ifx#1\end%
  \let\next=\relax
  \else
  \ifx#1i\dispkind=\z@\fi
  \ifx#1d\dispkind=\@ne\fi
  \ifx#1f\dispkind=\tw@\fi
  \ifx#1t\addtoLaTeXparams{t}\fi
  \ifx#1b\addtoLaTeXparams{b}\fi
  \ifx#1p\addtoLaTeXparams{p}\fi
  \ifx#1h\addtoLaTeXparams{h}\fi
  \ifx#1X\BoxFrametrue\fi
  \ifx#1O\OverFrametrue\fi
  \ifx#1U\UnderFrametrue\fi
  \ifx#1w
    \ifnum\draft=1\wasdrafttrue\else\wasdraftfalse\fi
    \draft=\@ne
  \fi
  \let\next=\readFRAMEparams
  \fi
 \next
 }%
%

\def\IFRAME#1#2#3#4#5#6{%
      \bgroup
      \let\QCTOptA\empty
      \let\QCTOptB\empty
      \let\QCBOptA\empty
      \let\QCBOptB\empty
      #6%
      \parindent=0pt%
      \leftskip=0pt
      \rightskip=0pt
      \setbox0 = \hbox{\QCBOptA}%
      \@tempdima = #1\relax
      \ifOverFrame
          \typeout{This is not implemented yet}%
          \show\HELP
      \else
         \ifdim\wd0>\@tempdima
            \advance\@tempdima by \@tempdima
            \ifdim\wd0 >\@tempdima
               \textwidth=\@tempdima
               \setbox1 =\vbox{%
                  \noindent\hbox to \@tempdima{\hfill\GRAPHIC{#5}{#4}{#1}{#2}{#3}\hfill}\\%
                  \noindent\hbox to \@tempdima{\parbox[b]{\@tempdima}{\QCBOptA}}%
               }%
               \wd1=\@tempdima
            \else
               \textwidth=\wd0
               \setbox1 =\vbox{%
                 \noindent\hbox to \wd0{\hfill\GRAPHIC{#5}{#4}{#1}{#2}{#3}\hfill}\\%
                 \noindent\hbox{\QCBOptA}%
               }%
               \wd1=\wd0
            \fi
         \else
            \ifdim\wd0>0pt
              \hsize=\@tempdima
              \setbox1 =\vbox{%
                \unskip\GRAPHIC{#5}{#4}{#1}{#2}{0pt}%
                \break
                \unskip\hbox to \@tempdima{\hfill \QCBOptA\hfill}%
              }%
              \wd1=\@tempdima
           \else
              \hsize=\@tempdima
              \setbox1 =\vbox{%
                \unskip\GRAPHIC{#5}{#4}{#1}{#2}{0pt}%
              }%
              \wd1=\@tempdima
           \fi
         \fi
         \@tempdimb=\ht1
         \advance\@tempdimb by \dp1
         \advance\@tempdimb by -#2%
         \advance\@tempdimb by #3%
         \leavevmode
         \raise -\@tempdimb \hbox{\box1}%
      \fi
      \egroup%
}%
%
\def\DFRAME#1#2#3#4#5{%
 \begin{center}
     \let\QCTOptA\empty
     \let\QCTOptB\empty
     \let\QCBOptA\empty
     \let\QCBOptB\empty
     \ifOverFrame 
        #5\QCTOptA\par
     \fi
     \GRAPHIC{#4}{#3}{#1}{#2}{\z@}
     \ifUnderFrame 
        \nobreak\par\nobreak#5\QCBOptA
     \fi
 \end{center}%
 }%
%
\def\FFRAME#1#2#3#4#5#6#7{%
 \begin{figure}[#1]%
  \let\QCTOptA\empty
  \let\QCTOptB\empty
  \let\QCBOptA\empty
  \let\QCBOptB\empty
  \ifOverFrame
    #4
    \ifx\QCTOptA\empty
    \else
      \ifx\QCTOptB\empty
        \caption{\QCTOptA}%
      \else
        \caption[\QCTOptB]{\QCTOptA}%
      \fi
    \fi
    \ifUnderFrame\else
      \label{#5}%
    \fi
  \else
    \UnderFrametrue%
  \fi
  \begin{center}\GRAPHIC{#7}{#6}{#2}{#3}{\z@}\end{center}%
  \ifUnderFrame
    #4
    \ifx\QCBOptA\empty
      \caption{}%
    \else
      \ifx\QCBOptB\empty
        \caption{\QCBOptA}%
      \else
        \caption[\QCBOptB]{\QCBOptA}%
      \fi
    \fi
    \label{#5}%
  \fi
  \end{figure}%
 }%
%
%
%
%
%
\newcount\dispkind%

\def\makeactives{
  \catcode`\"=\active
  \catcode`\;=\active
  \catcode`\:=\active
  \catcode`\'=\active
  \catcode`\~=\active
}
\bgroup
   \makeactives
   \gdef\activesoff{%
      \def"{\string"}
      \def;{\string;}
      \def:{\string:}
      \def'{\string'}
      \def~{\string~}
    }
\egroup

\def\FRAME#1#2#3#4#5#6#7#8{%
 \bgroup
 \ifnum\draft=\@ne
   \wasdrafttrue
 \else
   \wasdraftfalse%
 \fi
 \def\LaTeXparams{}%
 \dispkind=\z@
 \def\LaTeXparams{}%
 \doFRAMEparams{#1}%
 \ifnum\dispkind=\z@\IFRAME{#2}{#3}{#4}{#7}{#8}{#5}\else
  \ifnum\dispkind=\@ne\DFRAME{#2}{#3}{#7}{#8}{#5}\else
   \ifnum\dispkind=\tw@
    \edef\@tempa{\noexpand\FFRAME{\LaTeXparams}}%
    \@tempa{#2}{#3}{#5}{#6}{#7}{#8}%
    \fi
   \fi
  \fi
  \ifwasdraft\draft=1\else\draft=0\fi{}%
  \egroup
 }%
%

\def\TEXUX#1{"texux"}

%
%
%
%
%
%
%
%
%
%

%
\long\def\QQQ#1#2{%
     \long\expandafter\def\csname#1\endcsname{#2}}%
\@ifundefined{QTP}{\def\QTP#1{}}{}
\@ifundefined{QEXCLUDE}{\def\QEXCLUDE#1{}}{}
\@ifundefined{Qlb}{}{}
\@ifundefined{Qlt}{}{}
\long\def\QQA#1#2{}%
\def\QTR#1#2{{\csname#1\endcsname #2}}
\def\EXPAND#1[#2]#3{}%
\def\NOEXPAND#1[#2]#3{}%
\def\LaTeXparent#1{}%
\def\ChildStyles#1{}%
\def\ChildDefaults#1{}%
\def\QTagDef#1#2#3{}%

\@ifundefined{correctchoice}{}{}
\@ifundefined{HTML}{\def\HTML#1{\relax}}{}
\@ifundefined{TCIIcon}{\def\TCIIcon#1#2#3#4{\relax}}{}
\if@compatibility
  \typeout{Not defining UNICODE or CustomNote commands for LaTeX 2.09.}
\else
  \providecommand{\UNICODE}[2][]{}
  
\fi

%
\@ifundefined{StyleEditBeginDoc}{}{}
%
\def\QQfnmark#1{\footnotemark}

%
%
\@ifundefined{TCIMAKEINDEX}{}{\makeindex}%
%
\@ifundefined{abstract}{%
 \def\abstract{%
  \if@twocolumn
   \section*{Abstract (Not appropriate in this style!)}%
   \else \small 
   \begin{center}{\bf Abstract\vspace{-.5em}\vspace{\z@}}\end{center}%
   \quotation 
   \fi
  }%
 }{%
 }%
\@ifundefined{endabstract}{\def\endabstract
  {\if@twocolumn\else\endquotation\fi}}{}%
\@ifundefined{maketitle}{\def\maketitle#1{}}{}%
\@ifundefined{affiliation}{\def\affiliation#1{}}{}%
\@ifundefined{proof}{}{}%
\@ifundefined{endproof}{}{}%
\@ifundefined{newfield}{\def\newfield#1#2{}}{}%
\@ifundefined{chapter}{\def\chapter#1{\par(Chapter head:)#1\par }%
 \newcount\c@chapter}{}%
\@ifundefined{part}{\def\part#1{\par(Part head:)#1\par }}{}%
\@ifundefined{section}{\def\section#1{\par(Section head:)#1\par }}{}%
\@ifundefined{subsection}{\def\subsection#1%
 {\par(Subsection head:)#1\par }}{}%
\@ifundefined{subsubsection}{\def\subsubsection#1%
 {\par(Subsubsection head:)#1\par }}{}%
\@ifundefined{paragraph}{\def\paragraph#1%
 {\par(Subsubsubsection head:)#1\par }}{}%
\@ifundefined{subparagraph}{\def\subparagraph#1%
 {\par(Subsubsubsubsection head:)#1\par }}{}%
\@ifundefined{therefore}{}{}%
\@ifundefined{backepsilon}{}{}%
\@ifundefined{yen}{}{}%
\@ifundefined{registered}{%
   \def\registered{\relax\ifmmode{}\r@gistered
                    \else$\m@th\r@gistered$\fi}%
 \def\r@gistered{^{\ooalign
  {\hfil\raise.07ex\hbox{$\scriptstyle\rm\text{R}$}\hfil\crcr
  \mathhexbox20D}}}}{}%
\@ifundefined{Eth}{}{}%
\@ifundefined{eth}{}{}%
\@ifundefined{Thorn}{}{}%
\@ifundefined{thorn}{}{}%
%
\@ifundefined{degree}{}{}%
%
\newdimen\theight
\def\Column{%
 \vadjust{\setbox\z@=\hbox{\scriptsize\quad\quad tcol}%
  \theight=\ht\z@\advance\theight by \dp\z@\advance\theight by \lineskip
  \kern -\theight \vbox to \theight{%
   \rightline{\rlap{\box\z@}}%
   \vss
   }%
  }%
 }%
\def\qed{%
 \ifhmode\unskip\nobreak\fi\ifmmode\ifinner\else\hskip5\p@\fi\fi
 \hbox{\hskip5\p@\vrule width4\p@ height6\p@ depth1.5\p@\hskip\p@}%
 }%
\def\miss{\hbox{\vrule height2\p@ width 2\p@ depth\z@}}%
%
%
\def\tcol#1{{\baselineskip=6\p@ \vcenter{#1}} \Column}  %
%
%
\@ifundefined{note}{}{}%

\def\newfmtname{LaTeX2e}

\ifx\fmtname\newfmtname
  \DeclareOldFontCommand{\rm}{\normalfont\rmfamily}{\mathrm}
  \DeclareOldFontCommand{\sf}{\normalfont\sffamily}{\mathsf}
  \DeclareOldFontCommand{\tt}{\normalfont\ttfamily}{\mathtt}
  \DeclareOldFontCommand{\bf}{\normalfont\bfseries}{\mathbf}
  \DeclareOldFontCommand{\it}{\normalfont\itshape}{\mathit}
  \DeclareOldFontCommand{\sl}{\normalfont\slshape}{\@nomath\sl}
  \DeclareOldFontCommand{\sc}{\normalfont\scshape}{\@nomath\sc}
\fi

%

\def\alpha{{\Greekmath 010B}}%
\def\beta{{\Greekmath 010C}}%
\def\gamma{{\Greekmath 010D}}%
\def\delta{{\Greekmath 010E}}%
\def\epsilon{{\Greekmath 010F}}%
\def\zeta{{\Greekmath 0110}}%
\def\eta{{\Greekmath 0111}}%
\def\theta{{\Greekmath 0112}}%
\def\iota{{\Greekmath 0113}}%
\def\kappa{{\Greekmath 0114}}%
\def\lambda{{\Greekmath 0115}}%
\def\mu{{\Greekmath 0116}}%
\def\nu{{\Greekmath 0117}}%
\def\xi{{\Greekmath 0118}}%
\def\pi{{\Greekmath 0119}}%
\def\rho{{\Greekmath 011A}}%
\def\sigma{{\Greekmath 011B}}%
\def\tau{{\Greekmath 011C}}%
\def\upsilon{{\Greekmath 011D}}%
\def\phi{{\Greekmath 011E}}%
\def\chi{{\Greekmath 011F}}%
\def\psi{{\Greekmath 0120}}%
\def\omega{{\Greekmath 0121}}%
\def\varepsilon{{\Greekmath 0122}}%
\def\vartheta{{\Greekmath 0123}}%
\def\varpi{{\Greekmath 0124}}%
\def\varrho{{\Greekmath 0125}}%
\def\varsigma{{\Greekmath 0126}}%
\def\varphi{{\Greekmath 0127}}%

\def\nabla{{\Greekmath 0272}}
\def\FindBoldGroup{%
   {\setbox0=\hbox{$\mathbf{x\global\edef\theboldgroup{\the\mathgroup}}$}}%
}

\def\Greekmath#1#2#3#4{%
    \if@compatibility
        \ifnum\mathgroup=\symbold
           \mathchoice{\mbox{\boldmath$\displaystyle\mathchar"#1#2#3#4$}}%
                      {\mbox{\boldmath$\textstyle\mathchar"#1#2#3#4$}}%
                      {\mbox{\boldmath$\scriptstyle\mathchar"#1#2#3#4$}}%
                      {\mbox{\boldmath$\scriptscriptstyle\mathchar"#1#2#3#4$}}%
        \else
           \mathchar"#1#2#3#4%
        \fi 
    \else 
        \FindBoldGroup
        \ifnum\mathgroup=\theboldgroup 
           \mathchoice{\mbox{\boldmath$\displaystyle\mathchar"#1#2#3#4$}}%
                      {\mbox{\boldmath$\textstyle\mathchar"#1#2#3#4$}}%
                      {\mbox{\boldmath$\scriptstyle\mathchar"#1#2#3#4$}}%
                      {\mbox{\boldmath$\scriptscriptstyle\mathchar"#1#2#3#4$}}%
        \else
           \mathchar"#1#2#3#4%
        \fi     	    
	  \fi}

\newif\ifGreekBold  \GreekBoldfalse
\let\SAVEPBF=\pbf
\def\pbf{\GreekBoldtrue\SAVEPBF}%

\@ifundefined{theorem}{\newtheorem{theorem}{Theorem}}{}
\@ifundefined{lemma}{\newtheorem{lemma}[theorem]{Lemma}}{}
\@ifundefined{corollary}{\newtheorem{corollary}[theorem]{Corollary}}{}
\@ifundefined{conjecture}{}{}
\@ifundefined{proposition}{}{}
\@ifundefined{axiom}{}{}
\@ifundefined{remark}{}{}
\@ifundefined{example}{}{}
\@ifundefined{exercise}{}{}
\@ifundefined{definition}{\newtheorem{definition}{Definition}}{}

\@ifundefined{mathletters}{%
  \newcounter{equationnumber}  
  \def\mathletters{%
     \addtocounter{equation}{1}
     \edef\@currentlabel{\theequation}%
     \setcounter{equationnumber}{\c@equation}
     \setcounter{equation}{0}%
     \edef\theequation{\@currentlabel\noexpand\alph{equation}}%
  }
  
}{}

\@ifundefined{BibTeX}{%
    \def\BibTeX{{\rm B\kern-.05em{\sc i\kern-.025em b}\kern-.08em
                 T\kern-.1667em\lower.7ex\hbox{E}\kern-.125emX}}}{}%
\@ifundefined{AmS}%
    {\def\AmS{{\protect\usefont{OMS}{cmsy}{m}{n}%
                A\kern-.1667em\lower.5ex\hbox{M}\kern-.125emS}}}{}%
\@ifundefined{AmSTeX}{}{}%
%

\def\@@eqncr{\let\@tempa\relax
    \ifcase\@eqcnt \def\@tempa{& & &}\or \def\@tempa{& &}%
      \else \def\@tempa{&}\fi
     \@tempa
     \if@eqnsw
        \iftag@
           \@taggnum
        \else
           \@eqnnum\stepcounter{equation}%
        \fi
     \fi
     \global\tag@false
     \global\@eqnswtrue
     \global\@eqcnt\z@\cr}

\def\TCItag{\@ifnextchar*{\@TCItagstar}{\@TCItag}}
\def\@TCItag#1{%
    \global\tag@true
    \global\def\@taggnum{(#1)}}
\def\@TCItagstar*#1{%
    \global\tag@true
    \global\def\@taggnum{#1}}
%
%
%
\def\dfrac#1#2{{\displaystyle {#1 \over #2}}}%
%
%
%
%
%
%
%
%
%
%
%
%
%
%
%
%
%
%
%
%
%
%
%
%
%
%
%
%
%
%
%
%
%
%
%
%
%
%
%
%
%
%
%
%
%
%
%
%
%
%
%
%
%
%
%
%
%
%
%

%
%
\ifx\ds@amstex\relax
   \message{amstex already loaded}\makeatother 
\else
   \@ifpackageloaded{amsmath}%
      {\message{amsmath already loaded}\makeatother }
      {}
   \@ifpackageloaded{amstex}%
      {\message{amstex already loaded}\makeatother }
      {}
   \@ifpackageloaded{amsgen}%
      {\message{amsgen already loaded}\makeatother }
      {}
\fi
%
%
%
%
\let\DOTSI\relax
\def\RIfM@{\relax\ifmmode}%
\def\FN@{\futurelet\next}%
\newcount\intno@
\def\iint{\DOTSI\intno@\tw@\FN@\ints@}%
\def\iiint{\DOTSI\intno@\thr@@\FN@\ints@}%
\def\iiiint{\DOTSI\intno@4 \FN@\ints@}%
\def\idotsint{\DOTSI\intno@\z@\FN@\ints@}%
\def\ints@{\findlimits@\ints@@}%
\newif\iflimtoken@
\newif\iflimits@
\def\findlimits@{\limtoken@true\ifx\next\limits\limits@true
 \else\ifx\next\nolimits\limits@false\else
 \limtoken@false\ifx\ilimits@\nolimits\limits@false\else
 \ifinner\limits@false\else\limits@true\fi\fi\fi\fi}%
\def\multint@{\int\ifnum\intno@=\z@\intdots@                          
 \else\intkern@\fi                                                    
 \ifnum\intno@>\tw@\int\intkern@\fi                                   
 \ifnum\intno@>\thr@@\int\intkern@\fi                                 
 \int}
\def\multintlimits@{\intop\ifnum\intno@=\z@\intdots@\else\intkern@\fi
 \ifnum\intno@>\tw@\intop\intkern@\fi
 \ifnum\intno@>\thr@@\intop\intkern@\fi\intop}%
\def\intic@{%
    \mathchoice{\hskip.5em}{\hskip.4em}{\hskip.4em}{\hskip.4em}}%
\def\negintic@{\mathchoice
 {\hskip-.5em}{\hskip-.4em}{\hskip-.4em}{\hskip-.4em}}%
\def\ints@@{\iflimtoken@                                              
 \def\ints@@@{\iflimits@\negintic@
   \mathop{\intic@\multintlimits@}\limits                             
  \else\multint@\nolimits\fi                                          
  \eat@}
 \else                                                                
 \def\ints@@@{\iflimits@\negintic@
  \mathop{\intic@\multintlimits@}\limits\else
  \multint@\nolimits\fi}\fi\ints@@@}%
\def\intkern@{\mathchoice{\!\!\!}{\!\!}{\!\!}{\!\!}}%
\def\plaincdots@{\mathinner{\cdotp\cdotp\cdotp}}%
\def\intdots@{\mathchoice{\plaincdots@}%
 {{\cdotp}\mkern1.5mu{\cdotp}\mkern1.5mu{\cdotp}}%
 {{\cdotp}\mkern1mu{\cdotp}\mkern1mu{\cdotp}}%
 {{\cdotp}\mkern1mu{\cdotp}\mkern1mu{\cdotp}}}%
%
%
%
\def\RIfM@{\relax\protect\ifmmode}
\def\text{\RIfM@\expandafter\text@\else\expandafter\mbox\fi}
\let\nfss@text\text
\def\text@#1{\mathchoice
   {\textdef@\displaystyle\f@size{#1}}%
   {\textdef@\textstyle\tf@size{\firstchoice@false #1}}%
   {\textdef@\textstyle\sf@size{\firstchoice@false #1}}%
   {\textdef@\textstyle \ssf@size{\firstchoice@false #1}}%
   \glb@settings}

\def\textdef@#1#2#3{\hbox{{%
                    \everymath{#1}%
                    \let\f@size#2\selectfont
                    #3}}}
\newif\iffirstchoice@
\firstchoice@true
%
%
\def\Let@{\relax\iffalse{\fi\let\\=\cr\iffalse}\fi}%
\def\vspace@{\def\vspace##1{\crcr\noalign{\vskip##1\relax}}}%
\def\multilimits@{\bgroup\vspace@\Let@
 \baselineskip\fontdimen10 \scriptfont\tw@
 \advance\baselineskip\fontdimen12 \scriptfont\tw@
 \lineskip\thr@@\fontdimen8 \scriptfont\thr@@
 \lineskiplimit\lineskip
 \vbox\bgroup\ialign\bgroup\hfil$\m@th\scriptstyle{##}$\hfil\crcr}%
\def\Sb{_\multilimits@}%
\def\endSb{\crcr\egroup\egroup\egroup}%
\def\Sp{^\multilimits@}%

%
%
%
\newdimen\ex@
\ex@.2326ex
\def\rightarrowfill@#1{$#1\m@th\mathord-\mkern-6mu\cleaders
 \hbox{$#1\mkern-2mu\mathord-\mkern-2mu$}\hfill
 \mkern-6mu\mathord\rightarrow$}%
\def\leftarrowfill@#1{$#1\m@th\mathord\leftarrow\mkern-6mu\cleaders
 \hbox{$#1\mkern-2mu\mathord-\mkern-2mu$}\hfill\mkern-6mu\mathord-$}%
\def\leftrightarrowfill@#1{$#1\m@th\mathord\leftarrow
\mkern-6mu\cleaders
 \hbox{$#1\mkern-2mu\mathord-\mkern-2mu$}\hfill
 \mkern-6mu\mathord\rightarrow$}%
\def\overrightarrow{\mathpalette\overrightarrow@}%
\def\overrightarrow@#1#2{\vbox{\ialign{##\crcr\rightarrowfill@#1\crcr
 \noalign{\kern-\ex@\nointerlineskip}$\m@th\hfil#1#2\hfil$\crcr}}}%

\def\overleftarrow{\mathpalette\overleftarrow@}%
\def\overleftarrow@#1#2{\vbox{\ialign{##\crcr\leftarrowfill@#1\crcr
 \noalign{\kern-\ex@\nointerlineskip}$\m@th\hfil#1#2\hfil$\crcr}}}%
\def\overleftrightarrow{\mathpalette\overleftrightarrow@}%
\def\overleftrightarrow@#1#2{\vbox{\ialign{##\crcr
   \leftrightarrowfill@#1\crcr
 \noalign{\kern-\ex@\nointerlineskip}$\m@th\hfil#1#2\hfil$\crcr}}}%
\def\underrightarrow{\mathpalette\underrightarrow@}%
\def\underrightarrow@#1#2{\vtop{\ialign{##\crcr$\m@th\hfil#1#2\hfil
  $\crcr\noalign{\nointerlineskip}\rightarrowfill@#1\crcr}}}%

\def\underleftarrow{\mathpalette\underleftarrow@}%
\def\underleftarrow@#1#2{\vtop{\ialign{##\crcr$\m@th\hfil#1#2\hfil
  $\crcr\noalign{\nointerlineskip}\leftarrowfill@#1\crcr}}}%
\def\underleftrightarrow{\mathpalette\underleftrightarrow@}%
\def\underleftrightarrow@#1#2{\vtop{\ialign{##\crcr$\m@th
  \hfil#1#2\hfil$\crcr
 \noalign{\nointerlineskip}\leftrightarrowfill@#1\crcr}}}%

\def\qopnamewl@#1{\mathop{\operator@font#1}\nlimits@}
\let\nlimits@\displaylimits
\def\setboxz@h{\setbox\z@\hbox}

\def\varlim@#1#2{\mathop{\vtop{\ialign{##\crcr
 \hfil$#1\m@th\operator@font lim$\hfil\crcr
 \noalign{\nointerlineskip}#2#1\crcr
 \noalign{\nointerlineskip\kern-\ex@}\crcr}}}}

 \def\rightarrowfill@#1{\m@th\setboxz@h{$#1-$}\ht\z@\z@
  $#1\copy\z@\mkern-6mu\cleaders
  \hbox{$#1\mkern-2mu\box\z@\mkern-2mu$}\hfill
  \mkern-6mu\mathord\rightarrow$}
\def\leftarrowfill@#1{\m@th\setboxz@h{$#1-$}\ht\z@\z@
  $#1\mathord\leftarrow\mkern-6mu\cleaders
  \hbox{$#1\mkern-2mu\copy\z@\mkern-2mu$}\hfill
  \mkern-6mu\box\z@$}

\def\projlim{\qopnamewl@{proj\,lim}}
\def\injlim{\qopnamewl@{inj\,lim}}
\def\varinjlim{\mathpalette\varlim@\rightarrowfill@}
\def\varprojlim{\mathpalette\varlim@\leftarrowfill@}
\def\varliminf{\mathpalette\varliminf@{}}
\def\varliminf@#1{\mathop{\underline{\vrule\@depth.2\ex@\@width\z@
   \hbox{$#1\m@th\operator@font lim$}}}}
\def\varlimsup{\mathpalette\varlimsup@{}}
\def\varlimsup@#1{\mathop{\overline
  {\hbox{$#1\m@th\operator@font lim$}}}}

%
%
%
%
%
%
\begingroup \catcode `|=0 \catcode `[= 1
\catcode`]=2 \catcode `\{=12 \catcode `\}=12
\catcode`\\=12 
|gdef|@alignverbatim#1\end{align}[#1|end[align]]
|gdef|@salignverbatim#1\end{align*}[#1|end[align*]]

|gdef|@alignatverbatim#1\end{alignat}[#1|end[alignat]]
|gdef|@salignatverbatim#1\end{alignat*}[#1|end[alignat*]]

|gdef|@xalignatverbatim#1\end{xalignat}[#1|end[xalignat]]
|gdef|@sxalignatverbatim#1\end{xalignat*}[#1|end[xalignat*]]

|gdef|@gatherverbatim#1\end{gather}[#1|end[gather]]
|gdef|@sgatherverbatim#1\end{gather*}[#1|end[gather*]]

|gdef|@gatherverbatim#1\end{gather}[#1|end[gather]]
|gdef|@sgatherverbatim#1\end{gather*}[#1|end[gather*]]

|gdef|@multilineverbatim#1\end{multiline}[#1|end[multiline]]
|gdef|@smultilineverbatim#1\end{multiline*}[#1|end[multiline*]]

|gdef|@arraxverbatim#1\end{arrax}[#1|end[arrax]]
|gdef|@sarraxverbatim#1\end{arrax*}[#1|end[arrax*]]

|gdef|@tabulaxverbatim#1\end{tabulax}[#1|end[tabulax]]
|gdef|@stabulaxverbatim#1\end{tabulax*}[#1|end[tabulax*]]

|endgroup

\def\align{\@verbatim \frenchspacing\@vobeyspaces \@alignverbatim
You are using the "align" environment in a style in which it is not defined.}

\@namedef{align*}{\@verbatim\@salignverbatim
You are using the "align*" environment in a style in which it is not defined.}
\expandafter\let\csname endalign*\endcsname =\endtrivlist

\def\alignat{\@verbatim \frenchspacing\@vobeyspaces \@alignatverbatim
You are using the "alignat" environment in a style in which it is not defined.}

\@namedef{alignat*}{\@verbatim\@salignatverbatim
You are using the "alignat*" environment in a style in which it is not defined.}
\expandafter\let\csname endalignat*\endcsname =\endtrivlist

\def\xalignat{\@verbatim \frenchspacing\@vobeyspaces \@xalignatverbatim
You are using the "xalignat" environment in a style in which it is not defined.}

\@namedef{xalignat*}{\@verbatim\@sxalignatverbatim
You are using the "xalignat*" environment in a style in which it is not defined.}
\expandafter\let\csname endxalignat*\endcsname =\endtrivlist

\def\gather{\@verbatim \frenchspacing\@vobeyspaces \@gatherverbatim
You are using the "gather" environment in a style in which it is not defined.}

\@namedef{gather*}{\@verbatim\@sgatherverbatim
You are using the "gather*" environment in a style in which it is not defined.}
\expandafter\let\csname endgather*\endcsname =\endtrivlist

\def\multiline{\@verbatim \frenchspacing\@vobeyspaces \@multilineverbatim
You are using the "multiline" environment in a style in which it is not defined.}

\@namedef{multiline*}{\@verbatim\@smultilineverbatim
You are using the "multiline*" environment in a style in which it is not defined.}
\expandafter\let\csname endmultiline*\endcsname =\endtrivlist

\def\arrax{\@verbatim \frenchspacing\@vobeyspaces \@arraxverbatim
You are using a type of "array" construct that is only allowed in AmS-LaTeX.}

\def\tabulax{\@verbatim \frenchspacing\@vobeyspaces \@tabulaxverbatim
You are using a type of "tabular" construct that is only allowed in AmS-LaTeX.}

\@namedef{arrax*}{\@verbatim\@sarraxverbatim
You are using a type of "array*" construct that is only allowed in AmS-LaTeX.}
\expandafter\let\csname endarrax*\endcsname =\endtrivlist

\@namedef{tabulax*}{\@verbatim\@stabulaxverbatim
You are using a type of "tabular*" construct that is only allowed in AmS-LaTeX.}
\expandafter\let\csname endtabulax*\endcsname =\endtrivlist


 \def\endequation{%
     \ifmmode\ifinner 
      \iftag@
        \addtocounter{equation}{-1} 
        $\hfil
           \displaywidth\linewidth\@taggnum\egroup \endtrivlist
        \global\tag@false
        \global\@ignoretrue   
      \else
        $\hfil
           \displaywidth\linewidth\@eqnnum\egroup \endtrivlist
        \global\tag@false
        \global\@ignoretrue 
      \fi
     \else   
      \iftag@
        \addtocounter{equation}{-1} 
        \eqno \hbox{\@taggnum}
        \global\tag@false%
        $$\global\@ignoretrue
      \else
        \eqno \hbox{\@eqnnum}
        $$\global\@ignoretrue
      \fi
     \fi\fi
 } 

 \newif\iftag@ \tag@false
 
 \def\TCItag{\@ifnextchar*{\@TCItagstar}{\@TCItag}}
 \def\@TCItag#1{%
     \global\tag@true
     \global\def\@taggnum{(#1)}}
 \def\@TCItagstar*#1{%
     \global\tag@true
     \global\def\@taggnum{#1}}

  \@ifundefined{tag}{
     \def\tag{\@ifnextchar*{\@tagstar}{\@tag}}
     \def\@tag#1{%
         \global\tag@true
         \global\def\@taggnum{(#1)}}
     \def\@tagstar*#1{%
         \global\tag@true
         \global\def\@taggnum{#1}}
  }{}

\makeatother

\begin{document}

L. Gordeev, E. H. Haeusler\smallskip

\begin{center}
{\Large Proof Compression and NP Versus PSPACE II\smallskip :}

{\Large Addendum }%
\marginpar{
October 2020}\medskip
\end{center}

\textbf{Abstract. }{\small In \cite{GH2} we proved the conjecture \textbf{NP
= PSPACE} by advanced proof theoretic methods that combined Hudelmaier's
cut-free sequent calculus for minimal logic (HSC) \cite{Hud}\ with the
horizontal compressing in the corresponding minimal Prawitz-style natural
deduction (ND) \cite{Prawitz}. In this Addendum we show how to prove a
weaker result \textbf{NP = coNP} without referring to HSC. The underlying
idea (due to the second author) is to omit full minimal logic and compress
only ``naive'' normal tree-like ND refutations of the existence of
Hamiltonian cycles in given non-Hamiltonian graphs, since the Hamiltonian
graph problem in NP-complete. Thus, loosely speaking, the proof of \textbf{%
NP = coNP} can be obtained by HSC-elimination from our proof of \textbf{NP =
PSPACE} \cite{GH2}.}

\section{Introduction}

Recall that in \cite{GH1}, \cite{GH2} we proved that intuitionistically
valid purely implicational formulas $\rho $ have dag-like ND proofs $%
\partial $ whose weights (= the total numbers of symbols) are polynomial in
the weights $\left| \rho \right| $ of $\rho $. $\partial $ were defined by a
suitable two-folded horizontal compression of the appropriate tree-like ND $%
\partial _{1}$ obtained by standard conversion of basic tree-like HSC proofs 
$\pi $ existing by the validity of $\rho $. We observed that the height and
the total weight of distinct formulas occurring in ($\pi $, and hence also) $%
\partial _{1}$ are both polynomial in $\left| \rho \right| $. From this we
inferred that the compressed dag-like ND proofs $\partial $ are
weight-polynomial in $\left| \rho \right| $. Moreover, it is readily seen
that the latter conclusion holds true for any tree-like ND $\partial
^{\prime }$ with the polynomial upper bounds on the height and total weight
of distinct formulas used. We just arrived at the following conclusion,
where \textsc{NM}$_{\rightarrow }$ is standard purely implicational ND for
minimal logic (see also Appendix).

\begin{definition}
Tree-like\emph{\ }\textsc{NM}$_{\rightarrow }$-deduction with the
root-formula (= conclusion) $\rho $ is \emph{polynomial}, resp. \emph{%
quasi-polynomial}, if\ its weight (= total number of symbols), resp. height
plus total weight of distinct formulas,\ is polynomial in the weight of
conclusion, $\left| \rho \right| $.
\end{definition}

\begin{definition}
A given (tree- or dag-like) \textsc{NM}$_{\rightarrow }$-deduction is called
a \emph{proof} of its root-formula $\rho $ iff every maximal thread
connecting $\rho $ with a leaf $\alpha $ is closed (= discharged), i.e. it
contains a $\left( \rightarrow I\right) $ with conclusion $\alpha
\rightarrow \beta $, for some $\beta $.
\end{definition}

\begin{theorem}
Any quasi-polynomial tree-like proof of $\rho $ is compressible into a
polynomial dag-like proof of $\rho $.
\end{theorem}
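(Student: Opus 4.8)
The plan is to obtain the dag $\partial$ as a level-respecting, formula-preserving quotient of the given tree $\partial'$, and then to read the weight bound off the two quasi-polynomial parameters. First I would fix notation and isolate the obstacle. Let $\partial'$ be a quasi-polynomial tree-like proof of $\rho$, let $h$ be its height, and let $\mathcal F$ be the set of \emph{distinct} formulas occurring in it; by hypothesis both $h$ and $D:=\sum_{\varphi\in\mathcal F}\left|\varphi\right|$ are bounded by a fixed polynomial in $\left|\rho\right|$, and in particular $\left|\mathcal F\right|\le D$. The only reason $\partial'$ need not already be polynomial is branching: each $\left(\rightarrow E\right)$ carries two premises, so the number of occurrences in $\partial'$, and hence its total weight, may be as large as $\sim 2^{h}$ even though only few levels and few distinct formulas are present.

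Next I would perform the horizontal compression. Stratify the occurrences of $\partial'$ by their level $\ell$ with $0\le\ell\le h$ (distance from the root), and let $\partial$ be the quotient $\mu$ that identifies any two occurrences lying on one level and carrying the same formula, redirecting every premise-edge to the representative of its class. Because $\mu$ preserves formulas and sends premises to premises, each surviving node remains the conclusion of a legitimate $\left(\rightarrow I\right)$ or $\left(\rightarrow E\right)$ whose premises are still present, so $\partial$ is a bona fide dag-like \textsc{NM}$_{\rightarrow }$-deduction with root $\rho$. For the size bound, observe that by construction each of the $h+1$ levels contributes at most one node per formula of $\mathcal F$; hence $\partial$ has at most $\left(h+1\right)\left|\mathcal F\right|$ nodes and total weight at most $\left(h+1\right)D$, which is polynomial in $\left|\rho\right|$. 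Thus $\partial$ is polynomial in the sense of the first Definition, and the whole arithmetic of the theorem is contained in this one estimate.

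The hard part will be verifying that $\partial$ is still a \emph{proof}, i.e. that every maximal thread from $\rho$ to a leaf $\alpha$ is closed in the sense of the second Definition. This is delicate precisely because the quotient may splice together segments coming from different branches of $\partial'$, so a thread of $\partial$ need not be the $\mu$-image of any single thread of $\partial'$, and closedness therefore does not transfer automatically. I would resolve this by showing that the data governing discharge — the formula carried at each level and the presence of a discharging $\left(\rightarrow I\right)$ binding the terminal assumption $\alpha$ — is level-local and determined by the formula, so that every thread of $\partial$ can be lifted, level by level, to a thread of $\partial'$ ending in the same leaf $\alpha$; closedness of that lifted thread in $\partial'$ then supplies a discharging $\left(\rightarrow I\right)$ whose $\mu$-image closes the given thread of $\partial$. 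Equivalently, one checks that $\mu$ is a morphism of deductions that \emph{reflects} closed threads. Making this lifting precise — and, if necessary, refining the equivalence that is quotiented so that merged nodes share their discharge behaviour without disturbing the $\left(h+1\right)\left|\mathcal F\right|$ node count — is where essentially all the work of the theorem lies; the size bound above is then immediate.
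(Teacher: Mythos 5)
Your size estimate and your basic compression idea (merge occurrences of the same formula lying on the same level) do match the paper's strategy, but two of your central claims are false, and they are exactly where the paper has to do real work. First, the quotient you define is \emph{not} ``a bona fide dag-like \textsc{NM}$_{\rightarrow}$-deduction'': if two occurrences $c_{1},c_{2}$ of $\alpha$ at level $\ell$ are conclusions of different inferences --- say $c_{1}$ of $\left(\rightarrow E\right)$ from $\beta\rightarrow\alpha$ and $\beta$, while $c_{2}$ is the conclusion of $\left(\rightarrow E\right)$ from $\gamma\rightarrow\alpha$ and $\gamma$, or of an $\left(\rightarrow I\right)$ --- then the merged node inherits \emph{both} premise groups and is the conclusion of no legitimate single rule of \textsc{NM}$_{\rightarrow}$. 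The paper resolves this by passing to an extended calculus NM$_{\rightarrow}^{\flat}$ with a disjunctive \emph{separation} rule $\left(S\right)$ (``if at least one premise is proved then so is the conclusion''), and only at the very end collapses each $\left(S\right)$ to a plain repetition $\left(R\right)$ by selecting one premise group.

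Second, your lifting claim --- that discharge data is ``level-local and determined by the formula'', so every thread of $\partial$ lifts level by level to a thread of $\partial'$ --- is precisely what fails. A maximal thread of the quotient can splice: it may enter the merged node $[\alpha]$ at level $\ell$ along an edge inherited from $c_{1}$ and leave along a premise edge inherited from $c_{2}$, continuing to a leaf $\alpha'$ in $c_{2}$'s subtree. In $\partial'$ the $\left(\rightarrow I\right)$ discharging $\alpha'$ may lie strictly below level $\ell$ on the branch through $c_{2}$, a segment the spliced thread avoids because below level $\ell$ it runs through $c_{1}$'s branch; such a spliced thread is open, so the naive quotient need not be a proof at all. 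Nor can this be repaired by your fallback of ``refining the equivalence so that merged nodes share their discharge behaviour'': the number of distinct discharge behaviours of occurrences of one formula on one level can be exponential in $\left|\rho\right|$, so the refinement destroys the $\left(h+1\right)\left|\mathcal F\right|$ node count. The paper's way out is different: it keeps the full merging, equips the intermediate dag $\partial^{\flat}$ with a distinguished set of closed, \emph{locally coherent} threads inherited from the closed threads of the tree (a certificate which, as the paper notes, is not even polynomial-time verifiable, since there may be exponentially many threads), and then chooses the premises of each $\left(S\right)$ non-deterministically, guided by that thread set, so that the resulting plain \textsc{NM}$_{\rightarrow}$ subdeduction $\partial^{\ast}\subset\partial^{\flat}$ has all maximal threads closed. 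Your proposal is missing both the intermediate disjunctive calculus and this global, non-deterministic selection step.
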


Now let $P$ be a chosen NP-complete problem and suppose that $\rho $ is
valid iff $P$ has no positive solution. Then the existence of a tree-like ND
proof $\partial ^{\prime }$ as above will infer the existence of a dag-like
ND proof $\partial $ whose weight is polynomial in $\left| \rho \right| $,
which will eventually imply \textbf{NP = coNP}. In particular, let $P$ be
the Hamiltonian graph problem and purely implicational formula $\rho $
express in standard form that a given graph $G$ has no Hamiltonian cycles.
Suppose that the canonical proof search of $\rho $ in \textsc{NM}$%
_{\rightarrow }$ yields a normal tree-like proof $\partial ^{\prime }$ whose
height is polynomial in $\left| G\right| $ (and hence $\left| \rho \right| $%
), provided that $G$ is non-Hamiltonian. Since normal ND proofs satisfy the
subformula property, such $\partial ^{\prime }$ will obey the requested
polynomial upper bounds in question, and hence the weight of its horizontal
dag-like compression $\partial $ will be polynomially bounded, as desired.
That is, we argue as follows.

\begin{lemma}
Let $P$ be the Hamiltonian graph problem and purely implicational formula $%
\rho $ express in standard form that a given graph $G$ has no Hamiltonian
cycles. There exists a normal (and hence quasi-polynomial) tree-like proof
of $\rho $ whose height is polynomial in $\left| G\right| $ (and hence $%
\left| \rho \right| $), provided that $G$ is non-Hamiltonian.
\end{lemma}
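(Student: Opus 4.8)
The plan is to make the informal description preceding the statement precise in three stages: fix the encoding $\rho$, build a concrete normal deduction by simulating the exhaustive cycle-search, and then bound its height. Note first that the parenthetical ``(and hence quasi-polynomial)'' will come essentially for free once normality and the height bound are established, so the real content is the \emph{construction} of a normal proof of polynomial height.

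First I would pin down the purely implicational encoding. Writing $V=\left\{ v_{1},\ldots ,v_{n}\right\} $ for the vertices of $G$, I would introduce propositional atoms recording the partial data of a candidate cycle --- e.g.\ atoms asserting that a given vertex occupies a given position, or that a given edge is traversed --- together with a distinguished goal atom $g$ playing the role of the refuted conclusion. The constraints defining a Hamiltonian cycle (each position filled by exactly one vertex, each vertex used once, consecutive vertices adjacent in $G$, the tour closed) are packaged as implicational hypotheses, so that $\rho $ takes the shape $H_{1}\rightarrow \left( H_{2}\rightarrow \cdots \rightarrow \left( H_{m}\rightarrow g\right) \right) $. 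The essential point for the weight bound is that each constraint is local, referring to $O(1)$ or $O(n)$ atoms, so that the number of atoms and the length of $\rho $ are both polynomial in $\left| G\right| $; I would verify $\left| \rho \right| =\mathrm{poly}(\left| G\right| )$ explicitly, whence it suffices to bound the height in terms of $\left| \rho \right| $.

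Second, I would construct $\partial ^{\prime }$ by transcribing the canonical goal-directed proof search in \textsc{NM}$_{\rightarrow }$ into a tree-like deduction. Assuming the hypotheses $H_{1},\ldots ,H_{m}$, the search tries to derive $g$ by building a candidate cycle one vertex at a time; at each stage it branches over the possible next vertices, which amounts to applying the relevant hypothesis by $\left( \rightarrow E\right) $ and case-splitting. Because $G$ is non-Hamiltonian, every complete candidate sequence violates some constraint, and the corresponding hypothesis then yields $g$ and closes that branch. Collecting all branches gives a tree-like deduction all of whose maximal threads are discharged, i.e.\ a proof of $\rho $ in the sense of the earlier definition. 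The decisive feature is that the search tree is exponentially \emph{wide} --- there are up to $n!$ candidate sequences --- but only polynomially \emph{deep}: a candidate cycle is fully determined after $n$ vertex-choices, each of which I would arrange to cost only a bounded number of $\left( \rightarrow E\right) $/$\left( \rightarrow I\right) $ inferences, and the detection of a violated constraint along a closed branch likewise costs height $\mathrm{poly}(n)$. Hence every thread from root to leaf has length $\mathrm{poly}(n)=\mathrm{poly}(\left| G\right| )$. For normality, since the deduction is produced by pure goal-directed search (elimination rules applied only to hypotheses, introductions only to form the discharging implications) it contains no maximal formula, so it is normal and enjoys the subformula property; the distinct formulas are then subformulas of $\rho $, at most $\left| \rho \right| $ of them and each of weight $\leq \left| \rho \right| $, giving total weight $O(\left| \rho \right| ^{2})$, which together with the polynomial height yields quasi-polynomiality.

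The main obstacle I anticipate is engineering the encoding so that a single graph-search step corresponds to a \emph{bounded}-height block of \emph{normal} inferences simultaneously: one must ensure that applying a constraint-hypothesis and performing the attendant case split introduces no detour (maximal) formulas, which would break normality, while keeping the per-step height constant, and that closing a branch on a violated constraint does not secretly require re-deriving polynomially much data. Verifying the absence of hidden maximal segments in the assembled tree --- rather than the height counting, which is routine once the per-step cost is fixed --- is the delicate part, and is precisely the place where the ``standard form'' of the encoding must be chosen with care.
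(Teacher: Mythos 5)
Your combinatorial skeleton is exactly the paper's: an exhaustive case tree over candidate vertex sequences, exponentially wide but only polynomially tall, with every branch closed by a short derivation of the goal from whichever local constraint (a repeated vertex or a missing edge) that branch violates. The difference is the route into the purely implicational system, and it is precisely at the point you yourself flag as ``the delicate part'' that your plan has a genuine gap: you never say how an $n$-way case split (``which vertex occupies position $j$'') is to be performed inside \textsc{NM}$_{\rightarrow }$, a language with no disjunction and no $\vee $-elimination. For your goal-directed search to branch at all, the encoding must contain hypotheses of the shape $(X_{j,v_{1}}\rightarrow g)\rightarrow \cdots \rightarrow (X_{j,v_{n}}\rightarrow g)\rightarrow g$ (associating to the right) --- the implicational image of $n$-ary $\vee $-elimination; each branch must be packaged by $\left( \rightarrow I\right) $ into a minor premise $X_{j,v_{i}}\rightarrow g$ and fed into a chain of $\left( \rightarrow E\right) $'s. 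This does work (minor premises derived by introduction create no maximal formulas, since maximality concerns major premises of eliminations), but it is exactly the verification your proposal postpones, and it is where your bookkeeping slips: such a split costs $\Theta (n)$ inferences per level, not the ``bounded number'' you assert --- harmless for the polynomial bound, but a sign that the crucial step was never carried out.

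The paper avoids this engineering entirely by a two-stage argument, which is the repair I would recommend. Stage one: work in full intuitionistic propositional ND with the standard encoding $\alpha _{G}=A\wedge B\wedge C\wedge D\wedge E$ and prove that non-Hamiltonicity of $G$ yields a normal tree-like proof of $\alpha _{G}\rightarrow \bot $ of height polynomial in $n$. There the case split is literally ($n$-ary) $\vee $-elimination on the disjunctions $X_{j,v_{1}}\vee \cdots \vee X_{j,v_{n}}$ supplied by $\alpha _{G}$, and each leaf is closed by a chain of $\wedge $-eliminations extracting the violated conjunct of $B$ (repetition) or $E$ (missing edge) followed by two $\left( \rightarrow E\right) $'s; normality and the height bound $\mathcal{O}\left( n^{3}\right) $ (after unfolding the $n$-ary $\vee $-eliminations into binary ones) are immediate. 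Stage two: invoke the Statman--Haeusler translation $\gamma \mapsto \gamma ^{\star }$ into purely implicational minimal logic, cited as mapping normal intuitionistic ND proofs to normal \textsc{NM}$_{\rightarrow }$ proofs with only a constant-factor increase in height and at most cubic increase in size. That translation theorem is exactly the missing piece of your plan, obtained once and for all from the literature instead of being re-derived by hand inside a bespoke encoding; with it, your ``standard form'' $\rho $ is simply $\left( \alpha _{G}\rightarrow \bot \right) ^{\star }$ and the lemma follows.
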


Recall that polynomial ND proofs (whether tree- or dag-like) have
time-polynomial certificates (\cite{GH2}: Appendix), while the
non-hamiltoniacy of simple and directed graphs is coNP-complete. Hence
Theorem 3 yields

\begin{corollary}
$\mathbf{NP=coNP}$\textbf{\ }holds true.
\end{corollary}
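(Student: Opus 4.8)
The plan is to chain the two preceding results with the cited verification lemma and then invoke a standard completeness argument. Concretely, I would read the statement as the assertion that non-Hamiltonicity lies in $\mathbf{NP}$; since non-Hamiltonicity of (simple or directed) graphs is $\mathbf{coNP}$-complete, placing it in $\mathbf{NP}$ forces $\mathbf{coNP}\subseteq\mathbf{NP}$, and complementation then yields $\mathbf{NP}=\mathbf{coNP}$. First I would fix the encoding: for a graph $G$ let $\rho=\rho(G)$ be the purely implicational formula that, in standard form, expresses ``$G$ has no Hamiltonian cycle'', and I would check that $\left|\rho\right|$ is polynomial in $\left|G\right|$, so that ``polynomial in $\left|\rho\right|$'' and ``polynomial in $\left|G\right|$'' coincide. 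The verifier for non-Hamiltonicity then takes as its certificate a dag-like \textsc{NM}$_{\rightarrow}$-proof $\partial$ of $\rho$ and accepts iff $\partial$ is a genuine proof of $\rho$ in the sense of Definition 2, i.e.\ every maximal thread to a leaf is discharged.

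For \emph{completeness} of this verifier I would argue: if $G$ is non-Hamiltonian, then Lemma 4 supplies a normal tree-like proof of $\rho$ of height polynomial in $\left|G\right|$; by the subformula property enjoyed by normal proofs this proof is quasi-polynomial, and Theorem 3 compresses it into a dag-like proof $\partial$ of weight polynomial in $\left|\rho\right|$. Hence a short certificate always exists for a ``yes'' instance. For \emph{soundness} I would use the cited result (\cite{GH2}, Appendix) that a polynomial ND proof has a time-polynomial certificate, so the acceptance test runs in polynomial time; and since $\rho(G)$ is valid iff $G$ is non-Hamiltonian, soundness of \textsc{NM}$_{\rightarrow}$ guarantees that when $G$ \emph{is} Hamiltonian no proof of $\rho$ exists at all, so no candidate certificate can be accepted. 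Together these show that non-Hamiltonicity $\in\mathbf{NP}$.

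Finally I would close the complexity-theoretic argument. Non-Hamiltonicity is $\mathbf{coNP}$-complete, so every $\mathbf{coNP}$ language reduces to it in polynomial time; as we have just placed it in $\mathbf{NP}$ and $\mathbf{NP}$ is closed under polynomial-time many-one reductions, we obtain $\mathbf{coNP}\subseteq\mathbf{NP}$. Taking complements of both sides gives $\mathbf{NP}\subseteq\mathbf{coNP}$, and hence $\mathbf{NP}=\mathbf{coNP}$, as claimed.

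The hard part will not be this final bookkeeping but the inputs it consumes. The genuine obstacle is Lemma 4: guaranteeing that the naive normal proof search for $\rho(G)$ terminates at \emph{polynomial} height for every non-Hamiltonian $G$. The subformula property controls the total weight of distinct formulas, but the height bound is precisely where the combinatorics of Hamiltonicity must be exploited, and it is the load-bearing hypothesis of the whole chain. Secondarily, I would take care that the polynomial-time checker can actually verify the discharge/closedness condition of Definition 2 on a \emph{dag-like} $\partial$ — in particular that the sharing introduced by the horizontal compression does not smuggle in an unsound identification of threads — so that the certificate relation is simultaneously correct and efficiently decidable.
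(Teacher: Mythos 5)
Your proposal is correct and follows essentially the same route as the paper: Lemma 4 supplies the quasi-polynomial normal tree-like proof, Theorem 3 compresses it to a polynomial dag-like proof serving as an NP certificate (checkable in polynomial time by the cited result of \cite{GH2}, Appendix), and the coNP-completeness of non-Hamiltonicity then yields $\mathbf{coNP}\subseteq\mathbf{NP}$ and hence $\mathbf{NP}=\mathbf{coNP}$. Your added care about the polynomial relation between $\left|\rho\right|$ and $\left|G\right|$, and about efficient verification of the discharge condition on dag-like proofs, merely makes explicit what the paper leaves implicit.
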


This argument does not refer to sequent calculus. Summing up, in order to
complete our HSC-free proof of \textbf{NP = coNP}\ it will suffice to prove
Lemma 4. This will be elaborated in the rest of the paper.

\section{Hamiltonian problem}

Consider a simple\footnote{{\footnotesize A simple graph has no multiple
edges. For every pair of nodes }$(v_{1},v_{2})${\footnotesize \ in the graph
there is at most one edge from }$v_{1}${\footnotesize \ to }$v_{2}$.}
directed graph $G=\langle V_{G},E_{G}\rangle $, $card\left( V_{G}\right) =n$%
. A \emph{Hamiltonian path} (or \emph{cycle}) in $G$ is a sequence of nodes $%
\mathcal{X}=v_{1}v_{2}\ldots v_{n}$, such that, the mapping $i\mapsto v_{i}$
is a bijection of $\left[ n\right] =\{1,\cdots ,n\}$ onto $V_{G}$ and for
every $0<i<n$ there exists an edge $(v_{i},v_{i+1})\in E_{G}$. The
(decision) problem whether or not there is a Hamiltonian path in $G$ is
known to be NP-complete (cf. e.g. \cite{arora}). If the answer is YES then $%
G $ is called Hamiltonian. In order to verify that a given sequence of nodes 
$\mathcal{X}$, as above, is a Hamiltonian path it will suffice to confirm
that:

\begin{enumerate}
\item  There are no repeated nodes in $\mathcal{X}$,

\item  No element $v\in V_{G}$ is missing in $\mathcal{X}$,

\item  For each pair $\left\langle v_{i}v_{j}\right\rangle $ in $\mathcal{X}$
there is an edge $(v_{i},v_{j})\in E_{G}$.
\end{enumerate}

It is readily seen that the conjunction of $1,2,3$ is verifiable by a
deterministic TM in $n$-polynomial time. Consider a natural formalization of
these conditions (cf. e.g. \cite{arora}) in propositional logic with one
constant $\bot $ (\emph{falsum}) and three connectives $\wedge $, $\vee $, $%
\rightarrow $ (as usual $\lnot F:=F\rightarrow \bot $).

\begin{definition}
For any $G=\langle V_{G},E_{G}\rangle $, $card(V_{G})=n>0$, as above,
consider propositional variables $X_{i,v}$, $i\in \left[ n\right] $, $v\in
V_{G}$. Informally, $X_{i,v}$ should express that vertex $v$ is visited in
the step $i$ in a path on $G$. Define propositional formulas $A-E$ as
follows and let $\alpha _{G}:=A\wedge B\wedge C\wedge D\wedge E$.

\begin{enumerate}
\item  \label{A} $A=\bigwedge_{v\in V}\left( X_{1,v}\vee \ldots \vee
X_{n,v}\right) $ (: every vertex is visited in $X$).

\item  \label{B} $B=\bigwedge_{v\in V}\bigwedge_{i\neq j}\left(
X_{i,v}\rightarrow \left( X_{j,v}\rightarrow \bot \right) \right) $ (: there
are no repetitions in $X$).

\item  \label{C} $C=\bigwedge_{i\in \left[ n\right] }\bigvee_{v\in V}X_{i,v}$
(: at each step at least one vertex is visited).

\item  \label{D} $D=\bigwedge_{v\neq w}\bigwedge_{i\in \left[ n\right]
}\left( X_{i,v}\rightarrow \left( X_{i,w}\rightarrow \bot \right) \right) $
(: at each step at most one vertex is visited).

\item  \label{E} $E=\bigwedge_{(v,w)\not\in E}\bigwedge_{i\in \left[ n-1%
\right] }\left( X_{i,v}\rightarrow \left( X_{i+1,w}\rightarrow \bot \right)
\right) $ (: if there is no edge from $v$ to $w$ then $w$ can't be visited
immediately after $v$).
\end{enumerate}
\end{definition}

Thus $G$ is Hamiltonian iff $\alpha _{G}$ is satisfiable. Denote by $%
SAT_{Cla}$ the set of satisfiable formulas in classical propositional logic
and by $TAUT_{Int}$ the set of tautologies in the intuitionistic one. Then
the following conditions hold: (1) $G$ is non-hamiltonian iff $\alpha
_{G}\not\in SAT_{Cla}$, (2) $G$ is non-Hamiltonian iff $\lnot \alpha _{G}\in
TAUT_{Cla}$, (3) $G$ is non-Hamiltonian iff $\lnot \alpha _{G}\in TAUT_{Int}$%
. Glyvenko's theorem yields the equivalence between (2) to (3). Hence $G$ is
non-Hamiltonian iff there is an intuitionistic proof of $\lnot \alpha _{G}$.
Such proof is called a certificate for the non-hamiltoniacy of $G$. \cite
{Statman79} (also \cite{Haeusler2014}) presented a translation from formulas
in full propositional intuitionistic language into the purely implicational
fragment of minimal logic whose formulas are built up from $\rightarrow $
and propositional variables. This translation employs new propositional
variables $q_{\gamma }$ for logical constants and complex propositional
formulas $\gamma $ (in particular, every $\alpha \vee \beta $ and $\alpha
\wedge \beta $ should be replaced by $q_{\alpha \vee \beta }$ and $q_{\alpha
\wedge \beta }$, respectively) while adding implicational axioms stating
that $q_{\gamma }$ is equivalent to $\gamma $ . For any propositional
formula $\gamma $, let $\gamma ^{\star }$ denote its translation into purely
implicational minimal logic in question. Note that $size\left( \gamma
^{\star }\right) \leq size^{3}\left( \gamma \right) $. Now $\gamma \in
TAUT_{Int}$ iff $\gamma ^{\star }$ is provable in the minimal logic.
Moreover, it follows from \cite{Statman79}, \cite{Haeusler2014} that for any
normal ND proof $\partial $ of $\gamma $ there is a normal proof $\partial
_{\rightarrow }$ of $\gamma ^{\star }$ in the corresponding ND system for
minimal logic, \textsc{NM}$_{\rightarrow }$, such that $height\left(
\partial _{\rightarrow }\right) =\mathcal{O}\left( height\left( \partial
\right) \right) $. Thus in order to prove Lemma 4 it will suffice to
establish

\begin{claim}
$G$ is non-hamiltonian iff there exists a normal intuitionistic tree-like ND
proof of $\alpha _{G}\rightarrow \bot $ , i.e. $\lnot \alpha _{G}$, whose
height is polynomial in $n$.
\end{claim}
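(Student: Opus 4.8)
The plan is to prove the two directions separately, with all the content in the forward one. The backward direction is immediate from soundness: any tree-like intuitionistic ND derivation of $\alpha_G \rightarrow \bot$, of whatever height or form, witnesses $\lnot \alpha_G \in TAUT_{Int}$, whence $G$ is non-Hamiltonian by condition (3) recorded just above the claim. So I would concentrate on the forward direction: assuming $G$ is non-Hamiltonian, I must exhibit a \emph{normal} tree-like proof of $\alpha_G \rightarrow \bot$ whose height is polynomial in $n$.

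The derivation I would build is the ``naive'' exhaustive refutation anticipated in the introduction. At the root I apply $(\rightarrow I)$, assuming $\alpha_G$ and aiming at $\bot$; this will be the only introduction in the whole tree. From $\alpha_G$ I extract by $(\wedge E)$ the conjunct $C = \bigwedge_{i\in[n]}\bigvee_{v\in V_G} X_{i,v}$ and then its first disjunction $\bigvee_{v} X_{1,v}$, which serves as the major premise of a $(\vee E)$ branching over the vertex $v_1$ visited at step $1$. Inside the branch for $v_1$ I extract $\bigvee_{v} X_{2,v}$ from $C$ and apply $(\vee E)$ again, branching over $v_2$, and so on. Thus along every root-to-leaf path I accumulate a partial sequence $v_1 v_2 \cdots v_k$ together with open assumptions $X_{1,v_1},\ldots,X_{k,v_k}$, each $X_{i,v_i}$ to be discharged at the $(\vee E)$ that introduced it.

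Branches are closed using $B$ and $E$. If the current sequence repeats a vertex, say $v_j = v_k$ with $j<k$, I extract from $B$ the conjunct $X_{j,v_j}\rightarrow(X_{k,v_k}\rightarrow\bot)$ and derive $\bot$ by two $(\rightarrow E)$ on the assumptions $X_{j,v_j}$ and $X_{k,v_k}$; if it uses a non-edge $(v_{k-1},v_k)\notin E_G$, I extract from $E$ the conjunct $X_{k-1,v_{k-1}}\rightarrow(X_{k,v_k}\rightarrow\bot)$ and derive $\bot$ likewise. The combinatorial heart is that, because $G$ is non-Hamiltonian, every branch must close by depth $n$: a length-$n$ sequence triggering neither $B$ nor $E$ would consist of $n$ pairwise-distinct vertices, hence a bijection $[n]\to V_G$, with every consecutive pair an edge, i.e. a Hamiltonian path — a contradiction. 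Hence the $(\vee E)$-nesting has depth at most $n$ and the tree is finite.

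For the height bound I would note that every root-to-leaf path meets at most $n$ successive $(\vee E)$-nodes, while the side work at each node — extracting the relevant disjunction of $C$, or a conjunct of $B$ or $E$, from $\alpha_G$ by $(\wedge E)$, plus at most two $(\rightarrow E)$ at a leaf — contributes height $O(n)$ (indeed $O(\log n)$ for balanced associations). As these extractions are side derivations of the spine, the height obeys $H_i \le 1 + \max(O(n), H_{i+1})$ with $H_{n+1}=O(n)$, giving total height $O(n)$, polynomial in $n$. Normality follows from the shape: the only introduction is the terminal $(\rightarrow I)$, which is the final inference and so never a premise of an elimination, so there is no maximum formula; and since $\bot$ is atomic in minimal logic it is never the major premise of an elimination, so the conclusion $\bot$ of each $(\vee E)$ begins no maximum segment. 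Since every $X_{i,v}$ is discharged at its $(\vee E)$ and $\alpha_G$ at the root, every thread is closed and the derivation is a proof in the sense of Definition 2. I expect the main obstacle to be exactly this simultaneous bookkeeping — arranging the repeated $(\wedge E)$-extractions from $\alpha_G$ so that no detour is hidden and the proof is genuinely normal, while keeping the $(\vee E)$-depth pinned at $n$ so the height stays polynomial — together with a precise formulation of the branch-closure fact that converts non-Hamiltonicity into the depth bound.
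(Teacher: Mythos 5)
Your proposal is correct and follows essentially the same route as the paper: nested $\vee$-eliminations on the conjuncts of $C$ enumerate which vertex is visited at each step, each branch is closed by extracting the violated conjunct of $B$ or $E$ via $\wedge$-eliminations (non-Hamiltonicity guaranteeing a violation by depth $n$), and a final $(\rightarrow I)$ discharges $\alpha_G$. The differences are only organizational — the paper expands all $n$ positions with $n$-ary $\vee$-eliminations (later unfolded to binary ones) and closes each complete sequence by a separate lemma, whereas you prune branches at the first violation — and your $O(n)$ height estimate is optimistic since extracting a conjunct of $B$ costs $O(n^2)$ eliminations under the given nesting (the paper gets $O(n^3)$ overall), but both bounds are polynomial as the claim requires.
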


\subsection{Proof of Claim 7}

The sufficiency trivially follows from the soundness of ND. Consider the
necessity. In the sequel we suppose that a non-Hamiltonian graph $G$ is
fixed and $\alpha _{G}=A\wedge B\wedge C\wedge D\wedge E$ (cf. Definition
6). Let $p:\{1,\ldots ,n\}\mapsto V_{G}$ be any sequence of nodes from $%
V_{G} $ of the length $n$ and let $\mathcal{X}_{p}:=\left\{
X_{1,p[1]},\cdots ,X_{n,p[n]}\right\} $ be corresponding set of
propositional variables. $\mathcal{X}_{p}$ and $p$ represent a path in $G$
that starts by visiting vertex $p[1]$, encoded by $X_{1,p[1]}$, followed by $%
p[2]$, encoded by $X_{2,p[2]}$, etc., up to $p[n]$ encoded by $X_{n,p[n]}$.
Since $G$ is non-Hamiltonian, $\mathcal{X}_{p}$ is inconsistent with $\alpha
_{G}$.

\begin{lemma}
For any $p$\ and $\mathcal{X}_{p}$ as above there is a normal intuitionistic
tree-like ND $\Pi _{p}$ with conclusion $\bot $, assumptions from $\mathcal{X%
}_{p}\cup \left\{ \alpha _{G}\right\} $ and $\emph{height}\left( \Pi
_{p}\right) =\mathcal{O}\left( n^{2}\right) $ : 
\begin{prooftree}
  \AxiomC{$\mathcal{X}_{p}\cup \left\{ \alpha _{G}\right\}$}
  \noLine
  \UnaryInfC{$\Pi_p$}
  \noLine
  \UnaryInfC{$\bot $}
\end{prooftree}
\end{lemma}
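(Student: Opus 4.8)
The plan is to build $\Pi_p$ using \emph{only} elimination rules applied to the assumption $\alpha_G$, which makes normality automatic, and then to bound its height by analysing how deep into the conjunctions of $\alpha_G$ the relevant conjunct sits. Everything rests on a two-case split according to whether the sequence $p$ repeats a vertex.

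First I would split on whether $p:[n]\to V_{G}$ is injective. If $p$ is \emph{not} injective, there are $i\neq j$ with $p[i]=p[j]=:v$, so $\mathcal{X}_{p}$ contains both $X_{i,v}$ and $X_{j,v}$, while the conjunct $X_{i,v}\rightarrow(X_{j,v}\rightarrow\bot)$ occurs in $B$. If $p$ \emph{is} injective then, since $card([n])=card(V_{G})=n$, it is a bijection, so the sequence $p[1]\cdots p[n]$ visits every vertex exactly once; because $G$ is non-Hamiltonian this sequence is not a Hamiltonian path, and the only remaining obstruction forces some $i\in[n-1]$ with $(p[i],p[i+1])\notin E_{G}$, whence the conjunct $X_{i,p[i]}\rightarrow(X_{i+1,p[i+1]}\rightarrow\bot)$ occurs in $E$. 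In either case I have located a single conjunct $\psi$ of $\alpha_{G}$ of the shape $X\rightarrow(Y\rightarrow\bot)$ with both $X,Y\in\mathcal{X}_{p}$; note that $A$, $C$, $D$ are never needed.

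Next I would assemble $\Pi_{p}$ in three blocks. Starting from the assumption $\alpha_{G}$, a chain of $\wedge E$-steps extracts the conjunct $\psi$; then one $\rightarrow E$ with major premise $\psi$ and minor premise $X\in\mathcal{X}_{p}$ yields $Y\rightarrow\bot$; then one further $\rightarrow E$ with minor premise $Y\in\mathcal{X}_{p}$ yields $\bot$. Since $\Pi_{p}$ contains no introduction rule whatsoever, no formula occurrence is simultaneously the conclusion of an introduction and the major premise of the corresponding elimination, so $\Pi_{p}$ is trivially normal (and enjoys the subformula property), with undischarged assumptions drawn from $\mathcal{X}_{p}\cup\{\alpha_{G}\}$ exactly as required.

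The only real work is the height bound, and this is where care is needed. The two concluding $\rightarrow E$-steps contribute a constant, so $\mathrm{height}(\Pi_{p})$ is governed by the length of the $\wedge E$-chain that isolates $\psi$, i.e. by the depth of $\psi$ inside $\alpha_{G}$. Reading $B=\bigwedge_{v}\bigl(\bigwedge_{i\neq j}(\cdots)\bigr)$ and $E=\bigwedge_{(v,w)\notin E}\bigl(\bigwedge_{i}(\cdots)\bigr)$ with their natural two-level nesting, the descent costs (outer block) plus (inner conjunct) steps, which is $\mathcal{O}(n)+\mathcal{O}(n^{2})$ for $B$ (there are $n$ vertices and $n(n-1)$ index pairs) and $\mathcal{O}(n^{2})+\mathcal{O}(n)$ for $E$ (there are $\mathcal{O}(n^{2})$ non-edges and $n-1$ indices); either way this is $\mathcal{O}(n^{2})$, and the initial $\mathcal{O}(1)$ steps selecting $B$ or $E$ out of $A\wedge B\wedge C\wedge D\wedge E$ change nothing. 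The main obstacle is precisely to fix the association of the large conjunctions so that this descent stays $\mathcal{O}(n^{2})$ rather than blowing up to $\mathcal{O}(n^{3})$ (the total number of conjuncts); once the nesting is pinned down, $\mathrm{height}(\Pi_{p})=\mathcal{O}(n^{2})$ follows.
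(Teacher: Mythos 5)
Your proof is correct and follows essentially the same route as the paper's: the same dichotomy (a repeated vertex, handled by a conjunct of $B$, versus a missing consecutive edge when $p$ is injective and hence bijective, handled by a conjunct of $E$), followed by a chain of $\wedge$-eliminations from $\alpha_G$ and two $\rightarrow$-eliminations with minor premises from $\mathcal{X}_p$, with normality immediate since no introduction rule occurs. Your extra care about fixing the nested association of $B$ and $E$ so the $\wedge E$-descent costs $\mathcal{O}(n)+\mathcal{O}(n^{2})$ rather than $\mathcal{O}(n^{3})$ is a point the paper passes over silently; it does not change the approach, and either bound would in fact suffice for the polynomial-height conclusion drawn later.
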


\begin{proof}
$\Pi _{p}$ is defined as follows. Since $G$ is non-Hamiltonian, we observe
that at least one of the conditions 1, 3 to be a Hamiltonian path\ (see
above in \S 2) fails for $\mathcal{X}_{p}$. Hence at least one of the
following is the case.

\begin{description}
\item  \emph{There are repeated nodes.} There are $1\leq i<j\leq n$, such
that $p[i]=p[j]=v\in V_{G}$; let $i<j$ be the least such pair. Consider a
deduction $\Gamma _{p}$ : 
\begin{prooftree}
    \AxiomC{$X_{j,v}$}
    \AxiomC{$X_{i,v}$}
    \AxiomC{$X_{i,v}\imply (X_{j,v}\imply \bot)$}
    \BinaryInfC{$X_{j,v}\imply \bot$}
    \BinaryInfC{$ \bot$}
  \end{prooftree}
of $\bot $ from $X_{i,v}$, $X_{j,v}$ and $X_{i,v}\rightarrow
(X_{j,v}\rightarrow \bot )$. Since $\left\{ X_{i,v},X_{j,v}\right\} \subset 
\mathcal{X}_{p}$, the assumption $X_{i,v}\rightarrow (X_{j,v}\rightarrow
\bot )$ is a component of the conjunction $B$ from $\alpha _{G}$. So let $%
\Delta _{p}$ be a chain of $\wedge $-elimination rules deducing $%
X_{i,v}\rightarrow (X_{j,v}\rightarrow \bot )$ from $\alpha _{G}$. Now let $%
\Pi _{p}$ be the corresponding concatenation $\Delta _{p}\circ \Gamma _{p}$
deducing $\bot $ from $\left\{ X_{i,v},X_{j,v},\alpha _{G}\right\} \subset 
\mathcal{X}_{p}\cup \left\{ \alpha _{G}\right\} $. Clearly $\Pi _{p}$ is
normal and $height\left( \Pi _{p}\right) =\mathcal{O}\left( n^{2}\right) $.

\item  \emph{There is a missing edge}. There is $1\leq i<n$, such that $%
p[i]=v\in V_{G}$, $p[i+1]=w\in V_{G}$ and $(v,w)\notin E_{G}$; Let $i$ be
the least such number. Consider a deduction $\Gamma _{p}$ : 
\begin{prooftree}
    \AxiomC{$X_{i+1,w}$}    
    \AxiomC{$X_{i,v}$}
    \AxiomC{$X_{i,v}\imply (X_{i+1,w}\imply \bot)$}
    \BinaryInfC{$X_{i+1,w}\imply \bot$}
    \BinaryInfC{$ \bot$}
  \end{prooftree}
of $\bot $ from $X_{i,v}$, $X_{i+1,w}$ and $X_{i,v}\rightarrow
(X_{i+1,w}\rightarrow \bot )$. Since $\left\{ X_{i,v},X_{i+1,w}\right\}
\subset \mathcal{X}_{p}$ and $(v,w)\notin E_{G}$, the assumption $%
X_{i,v}\rightarrow (X_{i+1,w}\rightarrow \bot )$ is a component of the
conjunction $E$ from $\alpha _{G}$. So let $\Delta _{p}$ be a chain of $%
\wedge $-elimination rules deducing $X_{i,v}\rightarrow
(X_{i+1,w}\rightarrow \bot )$ from $\alpha _{G}$. Now let $\Pi _{p}$ be the
corresponding concatenation $\Delta _{p}\circ \Gamma _{p}$ deducing $\bot $
from $\left\{ X_{i,v},X_{i+1,w},\alpha _{G}\right\} \subset \mathcal{X}%
_{p}\cup \left\{ \alpha _{G}\right\} $. Clearly $\Pi _{p}$ is normal and $%
height\left( \Pi _{p}\right) =\mathcal{O}\left( n^{2}\right) $.
\end{description}
\end{proof}

In the sequel for the sake of brevity we let $V_{G}=\{1,\cdots ,n\}$. Now
consider the deductions $\Pi _{p}^{i}$, $1\leq i\leq n$, in the extended ND
that includes standard $n$-ary $\vee $-elimination rules. $\Pi _{p}^{i}$ are
defined by recursion on $i$ using (in the initial case $i=1$) the $\Pi _{p\left( 1/k\right)}$
from the last lemma, where sequences $p\left( -j\right) :\{1,\ldots
,n\}\mapsto V_{G}\cup \{0\}$ and $p\left( j/k\right) :\{1,\ldots ,n\}\mapsto
V_{G}$\ are defined by $p\left( -j\right) \left[ k\right] :=\left\{ 
\begin{array}{lll}
p\left[ k\right] , & if & k=j, \\ 
0, & else, & 
\end{array}
\right. $ and $p\left( j/k\right) :=\left\{ 
\begin{array}{lll}
p\left[ k\right] , & if & k=j, \\ 
p\left[ j\right] , & else. & 
\end{array}
\right. $ So let

\begin{center}
$\Pi _{p}^{1}=${\small 
\begin{prooftree}
  \AxiomC{$ X_{1,v_1}\!\lor\cdots\lor X_{1,v_n}$}
  \AxiomC{$\mathcal{X}_{p(-1)}\!\cup \!\left\{ \alpha _{G}\right\},[X_{1,v_1}]$}
  \noLine
  \UnaryInfC{$\Pi_{p(1/1)}$}
  \noLine
  \UnaryInfC{$\bot$}
  \AxiomC{$\ldots$}
  \AxiomC{$\mathcal{X}_{p(-1)}\!\cup \!\left\{ \alpha _{G}\right\},[X_{n,v_n}]$}
  \noLine
  \UnaryInfC{$\Pi_{p(1/n)}$}
  \noLine
  \UnaryInfC{$\bot$}
  \QuaternaryInfC{$\bot \ \ \,$}
  \end{prooftree}
}
\end{center}

,

\begin{center}
$\Pi _{p}^{j+1}:=${\small 
\begin{prooftree}
  \AxiomC{$ X_{j+1,v_1}\!\!\lor\!\cdots\!\lor X_{j+1,v_n}$}
  \AxiomC{$\mathcal{X}_{p(-(j+1))}\!\!\cup \!\left\{ \alpha _{G}\right\}\!,\![X_{j+1,v_1}]$}
  \noLine
  \UnaryInfC{$\Pi_{p((j+1)/1)}^{j}$}
  \noLine
  \UnaryInfC{$\bot$}
  \AxiomC{$\!\!\!\!\ldots\!\!\!\!$}
  \AxiomC{$\mathcal{X}_{p(-(j+1))}\!\!\cup \!\left\{ \alpha _{G}\right\}\!,\![X_{j+1,v_n}]$}
  \noLine
  \UnaryInfC{$\Pi_{p((j+1)/n)}^{i}$}
  \noLine
  \UnaryInfC{$\bot$}
  \QuaternaryInfC{$\,\,\,\bot$}
  \end{prooftree}
}
\end{center}

.

Thus for $i=n$ we obtain.

\begin{center}
$\Pi _{p}^{n}=${\small 
\begin{prooftree}
  \AxiomC{$ X_{n,v_1}\!\lor\cdots\lor X_{n,v_n}$}
  \AxiomC{$\mathcal{X}_{p(-(n-1))}\!\!\cup \!\left\{ \alpha _{G}\right\},[X_{n,v_1}]$}
  \noLine
  \UnaryInfC{$\Pi_{p(n/1)}^{n-1}$}
  \noLine
  \UnaryInfC{$\bot$}
  \AxiomC{$\!\!\!\ldots\!\!\!$}
  \AxiomC{$\mathcal{X}_{p(-(n-1))}\!\cup \!\left\{ \alpha _{G}\right\},[X_{n,v_n}]$}
  \noLine
  \UnaryInfC{$\Pi_{p(n/n)}^{n-1}$}
  \noLine
  \UnaryInfC{$\bot$}
  \QuaternaryInfC{$\bot \ \ \,$}
  \end{prooftree}
}
\end{center}

.

\begin{lemma}
For any $p:\{1,\ldots ,n\}\mapsto V_{G}$, $\Pi _{p}^{n}$ is a normal
intuitionistic tree-like deduction with conclusion $\bot $ and (the only)
open assumption $\alpha _{G}$ in the extended ND in question Moreover, $%
\emph{height}\left( \Pi _{p}^{n}\right) =\mathcal{O}\left( n^{2}\right) $.
\end{lemma}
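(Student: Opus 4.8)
The plan is to prove, by induction on $j$, a single strengthened invariant and then read off the Lemma as its top instance $j=n$. For a fixed $p$ write $T_j:=\{X_{j+1,p[j+1]},\ldots ,X_{n,p[n]}\}$ for the tail of path-variables not yet case-split. The invariant $I(j)$ asserts: $\Pi _p^{j}$ is a normal tree-like deduction of the extended system, its conclusion is $\bot$, its set of undischarged assumptions is contained in $T_j\cup\{\alpha _G\}$, and $\mathrm{height}(\Pi _p^{j})=\mathcal{O}(n^2)$. Since $T_n=\emptyset$ while $\alpha _G$ is genuinely used in every branch, $I(n)$ yields precisely the statement of the Lemma.

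For the base case $j=1$ I would feed the deductions $\Pi _{p(1/k)}$ supplied by Lemma 8 into the $n$-ary $\vee$-elimination whose major premise is $C_1=X_{1,v_1}\vee\cdots\vee X_{1,v_n}$. That major premise is obtained from $\alpha _G$ by a chain of $\wedge$-eliminations (first peeling $C$ off $A\wedge B\wedge C\wedge D\wedge E$, then the first conjunct $\bigvee_{v}X_{1,v}$ of $C$), so it is the conclusion of an elimination, never of a $\vee$-introduction. Each branch $k$ discharges $[X_{1,v_k}]$; since $p(1/k)$ agrees with $p$ off position $1$, Lemma 8 places the undischarged assumptions of $\Pi _{p(1/k)}$ inside $\{X_{1,v_k}\}\cup T_1\cup\{\alpha _G\}$ (possibly a proper subset, as Lemma 8 may discharge $X_{1,v_k}$ vacuously), and after the discharge branch $k$ contributes at most $T_1\cup\{\alpha _G\}$. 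Taking the union over the $n$ branches keeps the undischarged assumptions inside $T_1\cup\{\alpha _G\}$, because all branches share the same tail $p[2],\ldots ,p[n]$.

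The inductive step $j\to j+1$ is identical in form: apply the $\vee$-elimination on $C_{j+1}=\bigvee_{v}X_{j+1,v}$, extracted from $\alpha _G$ by a $\wedge$-chain, with minor premises the deductions $\Pi _{p((j+1)/k)}^{j}$ furnished by $I(j)$. The key bookkeeping point is that $p((j+1)/k)$ alters $p$ only at position $j+1$, so the induction hypothesis places the undischarged assumptions of each branch inside $\{X_{j+1,v_k}\}\cup T_{j+1}\cup\{\alpha _G\}$; discharging $[X_{j+1,v_k}]$ in branch $k$ and unioning over $k$ yields exactly $T_{j+1}\cup\{\alpha _G\}$. Conclusion $\bot$ is immediate, and normality is preserved for a structural reason: the construction only ever applies elimination rules ($\wedge E$, $\vee E$, and inside Lemma 8 also $\to E$) on top of normal subdeductions, so no formula is the conclusion of an introduction rule, and every $\vee E$-conclusion is $\bot$ occurring as the root or as a minor (never major) premise; hence there are neither maximal formulas nor maximal segments, the deduction is in normal form, and every formula in it is a subformula of $\alpha _G$, giving the subformula property for free. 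For the height I would use the recurrence $\mathrm{height}(\Pi _p^{j+1})=1+\max\bigl(h_\wedge,\ \max_k\mathrm{height}(\Pi _{p((j+1)/k)}^{j})\bigr)$, where $h_\wedge=\mathcal{O}(n)$ bounds the side-branch $\wedge$-chain deriving $C_{j+1}$; this is dominated by the $\mathcal{O}(n^2)$ term coming from Lemma 8, and the $n$ nested eliminations add only $\mathcal{O}(n)$ along the spine, keeping the total at $\mathcal{O}(n^2)$.

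The main obstacle is precisely this assumption-bookkeeping: one must verify that the union of the undischarged assumptions over all $n$ minor premises does not grow beyond the tail $T_j$, which hinges on the fact that the branch sequences $p(j/k)$ differ from $p$ only in the coordinate currently being case-split, and that the (possibly vacuous) discharge of $X_{j,v_k}$ in the $k$-th branch is legitimate even when Lemma 8's proof does not actually use that variable. Everything else — conclusion $\bot$, normality, the subformula property, and the $\mathcal{O}(n^2)$ height — then follows routinely once the invariant is in place, and instantiating at $j=n$ completes the proof.
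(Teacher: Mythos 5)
Your proof is correct and takes essentially the same route as the paper's, whose entire argument is ``this easily follows from Lemma 8 by induction on $n$'': your invariant $I(j)$, with the tail $T_j$ of not-yet-case-split variables and the discharge bookkeeping at each $\vee$-elimination, is exactly the induction the authors intend. You also make explicit a detail the paper's figures leave implicit --- that each major premise $\bigvee_{v}X_{j,v}$ must be extracted from $\alpha_G$ by a chain of $\wedge$-eliminations so that $\alpha_G$ remains the only open assumption --- but this is a faithful elaboration of the same construction, not a different approach.
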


\begin{proof}
This easily follows from Lemma 8 by induction on $n$.
\end{proof}

Now let $\Pi :=\Pi _{Id}^{n}$ where $Id:\{1,\ldots ,n\}\mapsto V_{G}$ is the
identity $Id\left[ i\right] :=i.$ Denote by $\widehat{\Pi }$ the canonical
tree-like embedding of $\Pi $ into basic intuitionistic ND with plain
(binary) $\vee $-eliminations that is obtained by successive unfolding of
the $n$-ary $\vee $-elimination rules with premises $X_{j,v_{1}}\vee \cdots
\vee X_{j,v_{n}}$ involved. Note that $\emph{height}\left( \widehat{\Pi }%
\right) =\mathcal{O}\left( n^{3}\right) $. Moreover let $\partial $ denote $%
\widehat{\Pi }$ followed by the introduction of $\alpha _{G}\rightarrow \bot 
$ :%
\begin{prooftree}
   \AxiomC{$\alpha_{G}$}
  \AxiomC{$[ \alpha _{G} ]$}
  \noLine
   \UnaryInfC{$\widehat{\Pi }$}
    \noLine
    \UnaryInfC{$\bot$}
    \BinaryInfC{$\alpha_{G}\rightarrow \bot $}
  \end{prooftree}

\begin{corollary}
$\partial $ is a normal intuitionistic tree-like ND proof of $\alpha
_{G}\rightarrow \bot $ whose height is polynomial in $n$, as required.
\end{corollary}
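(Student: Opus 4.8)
The plan is to obtain the Corollary by assembling the facts already established for $\Pi = \Pi_{Id}^n$, tracking four properties through the two remaining structural operations (the unfolding of $\Pi$ into $\widehat{\Pi}$ and the concluding $(\rightarrow I)$): tree-likeness together with intuitionistic character, absence of open assumptions, normality, and a polynomial height bound.

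First I would instantiate Lemma 10 at $p = Id$, giving that $\Pi = \Pi_{Id}^n$ is a normal intuitionistic tree-like deduction of $\bot$ whose unique open assumption is $\alpha_G$, with $height(\Pi) = \mathcal{O}(n^2)$. Passing to $\widehat{\Pi}$ only replaces each $n$-ary $\vee$-elimination by a nested chain of binary ones; this keeps the deduction tree-like and intuitionistic, leaves the set of open assumptions untouched (the case-hypotheses $X_{j,v_k}$ are still discharged, now one binary $(\vee E)$ at a time, so $\alpha_G$ remains the sole open assumption), and inflates the height only polynomially, yielding $height(\widehat{\Pi}) = \mathcal{O}(n^3)$ as already recorded. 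Appending the final $(\rightarrow I)$ that discharges every occurrence of $\alpha_G$ then produces conclusion $\alpha_G \rightarrow \bot$, removes the last open assumption — so $\partial$ has no undischarged assumptions and is a genuine intuitionistic ND proof — and raises the height by exactly one, leaving it $\mathcal{O}(n^3)$, hence polynomial in $n$.

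The step meriting real care — and the only plausible obstacle — is that normality survives both operations. For the unfolding I would note that every major premise of a $\vee$-elimination in $\widehat{\Pi}$ is a conjunct $\bigvee_{v} X_{j,v}$ of $C$, extracted from the assumption $\alpha_G$ by a chain of $\wedge$-eliminations, and so is never the conclusion of a $\vee$-introduction; therefore splitting it into binary disjunctions creates no $\vee$-maximal formula. The same observation applies to the major premises of the $(\rightarrow E)$ steps inside each $\Gamma_p$ (conjuncts of $B$ or $E$) and of the $\wedge$-eliminations (subformulas of $\alpha_G$): all descend from the single assumption $\alpha_G$ and are never introduction-conclusions, so $\widehat{\Pi}$ inherits the normality of $\Pi$. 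Finally, the conclusion $\alpha_G \rightarrow \bot$ of the appended $(\rightarrow I)$ is the end-formula of $\partial$ and is the major premise of no elimination, so it cannot open a maximal segment; normality is thus undisturbed. Collecting these observations yields that $\partial$ is a normal intuitionistic tree-like proof of $\alpha_G \rightarrow \bot$ of height $\mathcal{O}(n^3)$, as required.
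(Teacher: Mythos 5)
Your proposal is correct and takes essentially the same route as the paper, which presents the corollary as an immediate consequence of the lemma on $\Pi_{p}^{n}$ instantiated at $p=Id$ together with the construction of $\widehat{\Pi}$ and the final $(\rightarrow I)$; your write-up simply makes explicit the bookkeeping (preservation of normality and tree-likeness under unfolding of the $n$-ary $\vee$-eliminations, discharge of the sole open assumption $\alpha_{G}$, and the $\mathcal{O}(n^{3})$ height bound) that the paper leaves implicit. Note only that the lemma you cite as ``Lemma 10'' is Lemma 9 in the paper's numbering, and that after unfolding the major premises of the \emph{inner} binary $\vee$-eliminations are discharged assumptions rather than $\wedge$-elimination conclusions --- still never conclusions of introductions, so your normality argument goes through unchanged.
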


\section{Appendix: More on Theorem 3}

\subparagraph{Theorem 3 (cf. Introduction)}

\emph{In standard ND for purely implicational minimal logic, NM}$%
_{\rightarrow }$\emph{, any quasi-polynomial tree-like proof }$\partial $%
\emph{\ of }$\rho $\emph{\ is compressible into a polynomial dag-like proof }%
$\partial ^{\ast }$\emph{\ of }$\rho $\emph{.}

\subparagraph{Proof sketch \protect\footnote{{\footnotesize See \cite{GH2}
for a precise presentation.}}}

The mapping $\partial \hookrightarrow \partial ^{\ast }$ is obtained by a
two-folded horizontal compression\ $\partial \hookrightarrow \partial
^{\flat }\hookrightarrow \partial ^{\ast }$, where $\partial ^{\flat }$ is a
polynomial dag-like deduction in NM$_{\rightarrow }^{\flat }$ that extends NM%
$_{\rightarrow }$ by a new \emph{separation} rule $\left( S\right) $%
\begin{equation*}
\fbox{$\left( S\right) :\dfrac{\overset{n\ times}{\overbrace{\alpha \quad
\cdots \quad \alpha }}}{\alpha \ }\ $($n$ arbitrary) }
\end{equation*}
whose identical premises are understood disjunctively: ``\emph{if at least
one premise is proved then so is the conclusion}'' (in contrast to ordinary
inferences: ``\emph{if all premises are proved then so are the conclusions}%
''). The notion of provability in NM$_{\rightarrow }^{\flat }$ is modified
accordingly such that proofs are locally correct deductions assigned with
appropriate sets of threads that are closed and satisfy special conditions
of \emph{local coherency}. Now $\partial ^{\flat }$ arises from $\partial $
by ascending (starting from the root) merging of different occurrences of
identical formulas occurring on the same level, followed by inserting
instances of $\left( S\right) $ instead of resulting multipremise
inferences. Corresponding locally coherent threads in $\partial ^{\flat }$
are inherited by the underlying (closed) threads in $\partial $ (in contrast
to ordinary local correctness, the local coherency is not verifiable in
polynomial time, as the total number of threads in question might be
exponential in $\left| \rho \right| $). A desired ``cleansed'' NM$%
_{\rightarrow }$-subdeduction $\partial ^{\ast }\subset \partial ^{\flat }$
arises by collapsing $\left( S\right) $ to plain repetitions 
\begin{equation*}
\fbox{$\left( R\right) :\dfrac{\alpha }{\alpha \ }$}
\end{equation*}
with respect to the appropriately chosen premises of $\left( S\right) $. The
choice is made non-deterministically using the set of locally coherent
threads in $\partial ^{\flat }$.

--------------------------------------------------------------------------------------------

--------------------------------------------------------------------------------------------

\textbf{Lew Gordeev}

University of T\"{u}bingen

Department of Computer Science

Sand 14, 72076 T\"{u}bingen,

Nedlitzer Str. 4a, 14612 Falkensee

Germany

\emph{E-mail}:\texttt{\ }l\texttt{ew.gordeew@uni-tuebingen.de\medskip }

\textbf{Edward Hermann Haeusler}

Pontificia Universidade Cat\'{o}lica do Rio de Janeiro - RJ

Department of Informatics

Rua Marques de S\~{a}o Vicente, 224, G\'{a}vea, Rio de Janeiro

Brasil

\emph{E-mail}:\texttt{\ hermann@inf.puc-rio.br\medskip }


\begin{thebibliography}{9}
\bibitem{arora}  S.~Arora and B.~Barak, \textbf{Computational Complexity: A
Modern Approach}, Cambridge University Press, 2009.

\bibitem{GH1}  L. Gordeev, E. H. Haeusler, \emph{Proof Compression and NP
Versus PSPACE}, Studia Logica (107) (1): 55--83 (2019)

\bibitem{GH2}  L. Gordeev, E. H. Haeusler, \emph{Proof Compression and NP
Versus PSPACE II}, Bulletin of the Section of Logic (49) (3): 213--230 (2020)

http://dx.doi.org/10.18788/0138-0680.2020.16

\bibitem{Haeusler2014}  E. H. Haeusler, \emph{Propositional Logics
Complexity and the Sub-Formula Property}, in Proceedings Tenth International
Workshop on Developments in Computational Models, {DCM} 2014, Vienna,
Austria, 13th July 2014.

\bibitem{Hud}  J. Hudelmaier, \emph{An }$O\left( n\log n\right) $\emph{%
-space decision procedure for intuitionistic propositional logic}, J. Logic
Computat. (3): 1--13 (1993)

\bibitem{Prawitz}  D. Prawitz, \textbf{Natural deduction: a
proof-theoretical study}. Almqvist \& Wiksell, 1965

\bibitem{Statman79}  R. Statman, \emph{Intuitionistic propositional logic is
polynomial-space complete}, Theor. Comp. Sci. (9): 67--72 (1979)
\end{thebibliography}
\end{document}